\newtheorem{theorem}{Theorem}[section]
\title{Control technique for synchronization of selected nodes in directed networks}
\author{Bruno Ursino, Lucia Valentina Gambuzza, Vito Latora, Mattia Frasca$^*$%
\thanks{B. Ursino, L.V. Gambuzza, and M. Frasca are with the Department of Electrical Electronic and Computer Science Engineering, University of Catania, Catania, Italy. V. Latora is with the School of Mathematical Sciences, Queen Mary University of London, London E1 4NS, UK and with the Department of Physics and Astronomy,  University of Catania and INFN, Catania, Italy. This work was supported by the Italian Ministry for Research and Education (MIUR) through Research Program PRIN 2017 under Grant 2017CWMF93. $^*$ Email: {\tt \small mattia.frasca@dieei.unict.it}}
}
\begin{document}

\maketitle
\thispagestyle{empty}

\begin{abstract}

In this Letter we propose a method to control a set of arbitrary
nodes in a directed network such that they follow a synchronous
trajectory which is, in general, not shared by the other units of the
network. The problem is inspired to those natural or artificial
networks whose proper operating conditions are associated to the
presence of clusters of synchronous nodes. Our proposed method is
based on the introduction of distributed controllers that
modify the topology of the connections in order to generate outer symmetries in
the nodes to be controlled. An optimization problem for the selection
of the controllers, which includes as a special case the minimization of the
number of the links added or removed, is also formulated and an
algorithm for its solution is introduced.

\end{abstract}

\begin{IEEEkeywords}
Network analysis and control; Control of networks.
\end{IEEEkeywords}

\section{Introduction}

\IEEEPARstart{I}{n} the last few decades most of the works on synchronization control
in complex networks have focused on the problem of steering the
network towards a collective state shared by all the units.
Such a synchronized state has been
obtained by means of
techniques ranging from pinning control
\cite{chen2007pinning,yu2009pinning} to adaptive strategies
\cite{delellis2009novel}, discontinuous coupling
\cite{coraggio2018synchronization}, stochastic broadcasting
\cite{jeter2018network} and impulsive control
\cite{guan2010synchronization}. Other studies have focused on the control
of a more structured state where the units split into clusters of synchronized
nodes, and each one of these groups follows a different trajectory
\cite{wu2009cluster,su2013decentralized,ma2013cluster,yu2014cluster,gambuzza2019criterion,lee2015pinning}.

In the above mentioned works, the control action is such that all
network nodes are forced to follow a given dynamical
behavior. However, the number of nodes and links can be very large in
real-world systems, so that the question of whether it is possible to
control the state of only a subset of the network units, disregarding
the behavior of the other units, becomes of great importance. Solving
the problem can lead to potentially interesting applications. Consider a
team of mobile agents and the case in which a particular task can be
accomplished by a subset of the agents only. In such a scenario, one
could exploit the relationship between oscillator synchronization and
collective motion \cite{paley2007oscillator} and apply control
techniques for synchronizing a subset of nodes to recruit only a group of
all the mobile agents and coordinate them. In a different
context, it is well known that synchronization of the whole brain network
is associated to pathological states, whereas neural areas are
actively synchronized to engage in functional roles
\cite{canolty2010oscillatory}. Our approach could provide control
techniques supporting neuromorphic engineering applications relying on
the principles of neuronal computation \cite{indiveri2011frontiers}.

Recently, it has been argued that a subset of network nodes can be
controlled by adopting a distributed control paradigm whose
formulation relies on the notion of symmetries in a graph
\cite{nico2013symm,gambuzza2019distributed}. The approach there presented is
restricted to undirected graphs, whereas here we propose
a control technique for the more general case of directed networks.
We find that, in order to form a
synchronous cluster, the nodes to control must have the same set of
successors, and the common value of their out-degree has to be larger
than a threshold, which decreases when the coupling strength
in the network is increased.
Both conditions can be matched by a proper design of
controllers adding to or removing links from the original network
structure. The
selection of the controllers is addressed by formulating an
optimization problem, minimizing an objective function which accounts
for the costs associated to adding and/or removing links.
We show that an exact solution to the problem can be found,
and we propose an algorithm to calculate it.

The rest of the paper is organized as follows:
Sec.~\ref{sec:preliminaries} contains the preliminaries; the problem is
formulated in Sec.~\ref{sec:problemFormulation}; a theorem illustrating how to design the controllers is illustrated in Sec.~\ref{sec:Design of the controllers}; the optimization problem and its solution are dealt with in Sec.~\ref{sec:optimization}; an example of our approach is provided in Sec.~\ref{sec:examples} and the conclusions are drawn in Sec.~\ref{sec:Conclusions}.


\section{Preliminaries}
\label{sec:preliminaries}

In this section we introduce notations and definitions used in the rest of the paper \cite{latora2017complex}.
A graph $\mathcal{G}=\left(\mathcal{V}, \mathcal{E} \right)$ consists
of set of \emph{vertices} or \emph{nodes} $\mathcal{V}=\left\{
v_1,...,v_n \right\}$ and a set of \emph{edges} or \emph{links}
$\mathcal{E}\subseteq \mathcal{V}\times \mathcal{V}$. Network nodes
are equivalently indicated as $v_i$ or, shortly, as $i$. If $\forall
\left(v_i,v_j\right) \in \mathcal{E} \Rightarrow
\left(v_j,v_i\right)\in \mathcal{E}$, the graph is \emph{undirected},
otherwise it is \emph{directed}. Only simple (i.e.,
containing no loops and no multiple edges) directed graphs are
considered in what follows.
The set $\mathcal{S}_i = \left\{ v_j \in \mathcal{V} |
\left( v_i, v_j \right)\in \mathcal{E} \right\}$ is the set of
\emph{successors} of $v_i$ (in undirected graphs $\mathcal{S}_i$
coincides with the set of neighbors).

The graph $\mathcal{G}$ can be described through the adjacency matrix $\mathrm{A}$, a $N\times N$ matrix, with $N=|\mathcal{V}|$, and whose elements are $a_{ij}=1$ if $v_j \in \mathcal{S}_i$ and $a_{ij}=0$, otherwise.
We define the \emph{out-degree} of a node $i$ as the number of its successors, $k_i=|\mathcal{S}_i|=\sum_{j=1}^{N}a_{ij}$. The Laplacian matrix, $\mathcal{L}$, is defined as $\mathcal{L} = D-\mathrm{A}$, where $D=diag\left\{ k_1,...,k_N \right\}$. Its elements are: $\mathcal{L}_{ij}=k_i$, if $i=j$, $\mathcal{L}_{ij}=0$, if $i\neq j$ and $v_j \notin \mathcal{S}_i$, and $\mathcal{L}_{ij}=-1$, if $i\neq j$ and $v_j \in \mathcal{S}_i$. From the definition it immediately follows that $\mathcal{L} 1_N = 0_{N,1}$ and, so, $0$ is an eigenvalue of the Laplacian matrix with corresponding eigenvector $1_N$.

While arguments based on network symmetries are used for controlling groups of nodes in undirected networks \cite{gambuzza2019distributed}, directed topologies require the notion of \emph{outer symmetrical nodes}, here introduced. We define two nodes $v_i$ and $v_j$ \emph{outer symmetrical} if $\mathcal{S}_{i}=\mathcal{S}_{j}$ and $\left( v_i,v_j \right) \notin \mathcal{E}$. This notion is more restrictive than that of input equivalence given in \cite{golubitsky2006nonlinear} for networks including different node dynamics and coupling functions. In \cite{golubitsky2006nonlinear} the \emph{input set} of a node $v_i$ is defined as $I(v_i) = \left\{ e \in \mathcal{E} : e=\left( v_i, v_j \right) \mbox{ for some } v_j\in V \right\}$. Two nodes $v_i$ and $v_j$ are called \emph{input equivalent} if and only if there exists a bijection $\beta: I(v_i) \rightarrow I(v_j)$ such that the \emph{type of connection} is preserved, that is the coupling function is the same and the extremes of the edges have the same dynamics. For networks of identical dynamical units and coupling functions, as those considered in our work, input equivalent nodes are nodes with the same out-degree. To be outer symmetrical, a further condition is required: outer symmetrical nodes are input equivalent nodes where the bijection is the identity. This property is fundamental for the control problem dealt with in our paper.



\section{Problem formulation}
\label{sec:problemFormulation}

Let us consider a directed network of $N$ identical, $n$-dimensional units whose dynamics is given by
\begin{equation} \label{eq:1}
\dot{x_i} = f(x_i) - \sigma \sum_{j=1}^{N} \mathcal{L}_{ij}\mathrm{H} x_j+ u_i, \forall i=1,...,N
\end{equation}
\noindent with $x_i=\left( x_{i1}, x_{i2}, ..., x_{in}
\right)^T$. Here, $f:\mathbb{R}^n\rightarrow \mathbb{R}^n$ is the
uncoupled dynamics of each unit, $H\in \mathbb{R}^{n\times n}$ is a
constant matrix with elements taking values in $\{0,1\}$ that
represents inner coupling, i.e., it specifies the components of the
state vectors through which node $j$ is coupled to node $i$, and
$\sigma>0$ is the coupling strength. $u_i$ represent the control
actions on the network. Equations~(\ref{eq:1}) with $u_i=0$ are
extensively used to model diffusively coupled oscillators in biology,
chemistry, physics and engineering \cite{arenas2008synchronization}.

Equations~(\ref{eq:1}) can be rewritten in compact form as
\begin{equation} \label{eq:compactForm}
\dot{\mathbf{x}} = F(\mathbf{x}) - \sigma \mathcal{L} \otimes \mathrm{H} \mathbf{x}+\mathbf{u}
\end{equation}
\noindent where $\mathbf{x}=\left[ x_1^T, x_2^T, ..., x_N^T \right]^T$, $F(\mathbf{x}) = \left[ f^T(x_1), f^T(x_2), ..., f^T(x_N) \right]^T$ and $\mathbf{u}=\left[ u_1^T, u_2^T, ..., u_N^T \right]^T$. In the following we use distributed controllers of the form
\begin{equation}
\label{eq:controllers}
u_i= -\sigma  \sum_{j=1}^{N}\mathcal{L}'_{ij}\mathrm{H} x_j
\end{equation}
\noindent where $\mathcal{L}'$ is a matrix whose elements are $-1$, $0$ or $1$; if $\mathcal{L}_{ij}=0$, setting $\mathcal{L}'_{ij}=-1$ introduces a link between two nodes, $i$ and $j$, not connected in the pristine network; on the contrary, setting $\mathcal{L}'_{ij}=1$ in correspondence of $\mathcal{L}_{ij}=-1$ removes the existing edge $(v_i,v_j)$ in the pristine network; finally, setting $\mathcal{L}'_{ij}=0$ indicates no addition or removal of links between $i$ and $j$.
The diagonal elements of $\mathcal{L}'$ are $\mathcal{L}'_{ii}= - \sum_{j=1, j\neq i}^{N}\mathcal{L}'_{ij}$. Notice that, even if $\mathcal{L}'$ is not a Laplacian, the resulting matrix $\mathcal{L}'' = \mathcal{L} + \mathcal{L}'$ (a matrix representing the network formed by the original topology and the links added or removed by the controllers) is instead a Laplacian.

The system with the controllers reads
\begin{equation} \label{eq:controlled}
\dot{\mathbf{x}} = F(\mathbf{x}) -\sigma \mathcal{L}'' \otimes \mathrm{H}\mathbf{x}
\end{equation}

The problem tackled in this paper is twofold: i) given an arbitrary subset $V_{n_2}$ of $n_2<N$ nodes, to determine a set of controllers $u_i$ with $i=1,\ldots,N$ such that the nodes in $V_{n_2}$ synchronizes to each other; ii) to formulate an optimization problem for the selection of the controllers $u_i$.

Without lack of generality, we relabel the network nodes so that the nodes to control are indexed as $i=n_1+1,\ldots,N$, such that $V_{n_2}= \left\{ v_{n_1+1},...,v_N \right\}$. Objective of the controllers is, therefore, to achieve a synchronous evolution of the type
\begin{equation}
  \label{eq:synchSol}
\begin{cases}
x_1(t) = s_1(t)\\
\vdots\\
x_{n_1}(t) = s_{n_1}(t)\\
x_{n_1+1}(t) = x_{n_1+2}(t)=...=x_{N}(t) = s(t),  \quad t\rightarrow +\infty
\end{cases}
\end{equation}

\noindent In compact form the synchronous state is denoted as
$\mathbf{x}^s(t) = \left[ s_1^T(t),...,s_{n_1}^T(t),
  s^T(t),...,s^T(t)\right]^T$. In the most general case, the
trajectories of the first $n_1$ nodes are different from each other
and from $s(t)$, that is, $s_i(t)\neq s_j(t) \neq s(t)$ for $i, j =1,
\ldots, n_1$, but eventually some of them may coincide or converge to
$s(t)$. In the next section, we demonstrate how to select the
controllers such that the state $\mathbf{x}^s(t)$ exists and is
locally exponentially stable, while, in the second part of the paper,
we consider the optimization problem.

\section{Design of the controllers}
\label{sec:Design of the controllers}

To achieve a stable synchronous state $\mathbf{x}^s(t)$ the controllers $u_i$ as in Eq.~(\ref{eq:controllers}) have to satisfy the conditions expressed by the following theorem.

\begin{theorem}
\label{th:theorem1}
Consider the dynamical network~(\ref{eq:1}) and the controllers~(\ref{eq:controllers}) such that the Laplacian $\mathcal{L}''$ satisfies the following conditions:
\begin{enumerate}
\item \label{en:T1} $\mathcal{L}''_{i_1,j}=\mathcal{L}''_{i_2,j}$ for $i_1=n_1+1,\ldots,N$, $i_2=n_1+1,\ldots,N$, $j=1,\ldots,N$ and $j\neq i_1$, $j\neq i_2$;
\item \label{en:T2} $\mathcal{L}''_{i_1,i_2}=0$ for $i_1=n_1+1,\ldots,N$, $i_2=n_1+1,\ldots,N$ with $i_1\neq i_2$;
\end{enumerate}

\noindent then, a synchronous behavior $\mathbf{x}^s(t)= \left[ s_1^T(t),...,s_{n_1}^T(t), s^T(t),...,s^T(t)\right]^T$ exists.

In addition, define $k_i=\sum_j \mathcal{L}''_{i,j}$ with $i=n_1+1,\ldots,N$, and, since from hypothesis \ref{en:T1}) $ k_{n_1+1}= \ldots = k_{N}$, define $k \triangleq k_{n_1+1}= \ldots= k_{N}$. If
\begin{enumerate}
\addtocounter{enumi}{2}
\item \label{en:T3} there exists a diagonal matrix $\mathrm{L}>0$ and two constants $\overline{q} > 0$ and $\tau > 0$ such that the following linear matrix inequality (LMI) is satisfied $\forall q \geq \overline{q}$ and $t > 0$:
\begin{equation}
\label{eq:LMI}
\left[ Df(s(t)) - qH \right]^TL + L\left[ Df(s(t)) - qH \right] \leq -\tau I_n,
\end{equation}

\noindent where $Df(s(t))$ is the Jacobian of $f$ evaluated on $s(t)$;

\item \label{en:T4} $k$ is such that $k > \frac{\overline{q}}{\sigma}$;
\end{enumerate}
then, the synchronous state is locally exponentially stable.
\end{theorem}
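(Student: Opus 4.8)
The plan is to establish the two claims separately: first that conditions~\ref{en:T1}) and~\ref{en:T2}) make the synchronization manifold $x_{n_1+1}=\dots=x_N$ invariant, so that $\mathbf{x}^s(t)$ exists, and then that conditions~\ref{en:T3}) and~\ref{en:T4}) force transverse perturbations to decay exponentially. For existence I would substitute the candidate $\mathbf{x}^s(t)$ into the controlled system~(\ref{eq:controlled}) and check consistency of the $n_2$ equations governing the controlled nodes. Writing the equation for a generic controlled node $i\in\{n_1+1,\dots,N\}$ and evaluating the coupling sum $\sum_j \mathcal{L}''_{ij}Hx_j^s$ on the manifold, the contributions split into a part coming from the first $n_1$ nodes and a part coming from the other controlled nodes. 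Condition~\ref{en:T2}) annihilates the latter except for the diagonal self-term $\mathcal{L}''_{ii}Hs$, while condition~\ref{en:T1}) guarantees that the former, $\sum_{j\le n_1}\mathcal{L}''_{ij}Hs_j$, is identical for every controlled node; the same reasoning forces a common diagonal value $\mathcal{L}''_{ii}=k$. All $n_2$ controlled equations therefore reduce to one and the same ODE in the variable $s(t)$, so the manifold is flow-invariant and $\mathbf{x}^s(t)$ is a genuine solution.

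For stability I would linearize~(\ref{eq:controlled}) about $\mathbf{x}^s(t)$, writing $\delta x_i$ for the perturbations. Since the controlled nodes share the same Jacobian $Df(s(t))$, the perturbation of each controlled node obeys $\dot{\delta x}_i=[Df(s(t))-\sigma k H]\delta x_i-\sigma c(t)$, where the term $c(t)=\sum_{j\le n_1}\mathcal{L}''_{ij}H\delta x_j$ is, again by condition~\ref{en:T1}), common to all controlled nodes and carries the (possibly non-decaying) influence of the uncontrolled units. The decisive step is to introduce transverse coordinates, e.g.\ the differences $e_i=\delta x_i-\delta x_N$ for $i=n_1+1,\dots,N-1$, which span the directions orthogonal to the manifold: subtracting two perturbation equations cancels the shared term $-\sigma c(t)$ and yields the decoupled linear time-varying systems $\dot e_i=[Df(s(t))-\sigma k H]\,e_i$. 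The manifold is thus locally exponentially stable precisely when this single LTV system is.

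Finally, to prove exponential decay of the transverse modes I would set $q=\sigma k$, which is admissible because condition~\ref{en:T4}) gives $\sigma k>\overline q$, and use the LMI~(\ref{eq:LMI}) with the Lyapunov function $V(e_i)=e_i^T L e_i$. Differentiating along the trajectories gives $\dot V=e_i^T\big([Df(s)-\sigma k H]^TL+L[Df(s)-\sigma k H]\big)e_i\le -\tau\|e_i\|^2\le -(\tau/\lambda_{\max}(L))\,V$, whence each $e_i$ vanishes exponentially and the controlled nodes synchronize. I expect the main obstacle to be the decoupling step: one must argue carefully that conditions~\ref{en:T1})--\ref{en:T2}) remove both the coupling among controlled nodes and the effect of the free uncontrolled dynamics from the transverse equations, and that decay of the differences $e_i$ genuinely certifies stability of the manifold even though the along-manifold motion, including the $s_i(t)$, need not be stable. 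With that in place, the LMI/Lyapunov estimate is routine.
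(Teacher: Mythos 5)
Your proposal is correct and reaches the theorem's conclusion by the same overall skeleton as the paper --- existence from the structural hypotheses, linearization about $\mathbf{x}^s(t)$, reduction of the transverse dynamics to $n_2-1$ identical copies of $\dot{\zeta}=\left(Df(s)-\sigma k H\right)\zeta$, and the Lyapunov/LMI estimate with $q=\sigma k>\overline{q}$ --- but it implements the two key steps differently. For existence, you substitute the candidate solution and check that hypotheses 1)--2), together with the zero-row-sum property of $\mathcal{L}''$ (which is indeed what forces the common diagonal value $\mathcal{L}''_{ii}=k$; the paper's definition $k_i=\sum_j \mathcal{L}''_{i,j}$ is best read as the out-degree, consistent with $D=kI_{n_2}$ in its proof), collapse the $n_2$ controlled equations to a single ODE; the paper instead argues via outer symmetry, i.e., invariance of the vector field under permutations of the nodes in $V_{n_2}$. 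For the decoupling, the paper constructs the orthonormal transformation $M=\bigl[\begin{smallmatrix} I_{n_1} & 0\\ 0 & R_{n_2}\end{smallmatrix}\bigr]$ and computes $M^{-1}\mathcal{L}''M$ blockwise, using exactly the two structural facts you isolate ($D=kI_{n_2}$ from hypothesis 2), equal rows of the block $C$ from hypothesis 1)), so that $R_{n_2}^T C$ has only its first row nonzero and the transverse block reduces to $kI$; your pairwise differences $e_i=\delta x_i-\delta x_N$ achieve the same cancellation of the common forcing term $c(t)$ more elementarily, with no basis construction or similarity computation. The trade-off is the one you flag yourself: the paper's orthonormal coordinates split $\mathbb{R}^{nN}$ into $\mathbb{P}\oplus\mathbb{O}$ so that decay of the transverse components is, by construction, decay of the distance to the synchronization manifold, whereas with the non-orthogonal difference coordinates you need the (easy, and correctly supplied) remark that vanishing of all the $e_i$ is equivalent to vanishing of the transverse error, the along-manifold motion $s_1,\ldots,s_{n_1},s$ being left uncontrolled in both treatments. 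Your final estimate $\dot{V}\le -(\tau/\lambda_{\max}(L))V$ even makes the exponential rate explicit, a detail the paper leaves implicit.
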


\begin{proof} \textit{Existence of the synchronous solution.} Hypotheses \ref{en:T1}) and \ref{en:T2}) induce some structural properties in the new network defined by the original topology and the controller links. In particular, hypothesis \ref{en:T1}) is equivalent to require that each node in $V_{n_2}$ has the same set of successors, that is, $\mathcal{S}''_{n_1+1}=\ldots=\mathcal{S}''_{N}$, while hypothesis \ref{en:T2}) requires that there are no links between any pair of nodes in $V_{n_2}$, that is, $\forall v_i, v_j \in V_{n_2} \Rightarrow \left( v_i, v_j \right)\notin \mathcal{E}$. Consequently, selecting the controllers such that hypotheses 1) and 2) hold makes the nodes in $V_{n_2}$ outer symmetrical.

In turns this means that, with reference to the system in Eq.~(\ref{eq:controlled}), if we permute the nodes in $V_{n_2}$, the dynamical network does not change, and the $n_2$ nodes have the same equation of motion. If the nodes in $V_{n_2}$ start from the same initial conditions, then they remain synchronized for $t > t_0$, and thus a synchronous solution $\mathbf{x}^s(t)$ as in Eq.~(\ref{eq:synchSol}) exists.


\textit{Local exponential stability of the synchronous solution.} To prove the stability of $\mathbf{x}^s(t)$, we first prove that the synchronous solution $\mathbf{x}^s(t)$ is locally exponentially stable if
\begin{equation}
\label{eq:equivalent}
\dot{\zeta} = \left( Df - \sigma k H \right) \zeta
\end{equation}

\noindent is locally exponentially stable.

We first consider Eq.~(\ref{eq:controlled}) and linearize it around $\mathbf{x}^s(t)$. We define $\eta = \mathbf{x} - \mathbf{x}^s$ and calculate its dynamics as
\begin{equation}
\label{eq:linearized}
\dot{\eta} = DF \eta - \sigma \left( \mathcal{L}'' \otimes \mathrm{H} \right) \eta
\end{equation}

Let us indicate as $Df_i$ the Jacobian of $F$ evaluated on $\mathbf{x}^s_i(t)$. Taking into account Eq. (\ref{eq:synchSol}), it follows that $Df_{n_1+1}=...=Df_{N} \triangleq Df_s$ and, hence, $DF = diag\left\{ Df_1,...,Df_{n_1},Df_s,...,Df_s \right\}$.

From the structure of $\mathbf{x}^s$, it also follows that the synchronous behavior is preserved for all variations belonging to the linear subspace $\mathbb{P}$ generated by the column vectors of the following matrix
$$
\mathrm{M}_s=\begin{bmatrix}
  1 & 0 & ... & 0 & 0\\
  0 & 1 & ... & 0 & 0\\
  \vdots & \vdots & \ddots & \vdots & \vdots\\
  0 & 0 & ... & 1 & 0\\
  0 & 0 & ... & 0 & \frac{1}{\sqrt{n_2}}\\
  \vdots & \vdots & \ddots & \vdots & \vdots\\
  0 & 0 & ... & 0 & \frac{1}{\sqrt{n_2}}\\
\end{bmatrix} \otimes I_n
$$

Such variations in fact occur along the synchronization manifold where all the last $n_2$ units have the same evolution.
The column vectors of $\mathrm{M}_s$ represent an orthonormal basis for the considered linear subspace with $dim(\mathbb{P}) =n (n_1 + 1)$. The remaining vectors in $\mathbb{R}^{nN}\setminus \mathbb{P}$ represent transversal motions with respect to the synchronization manifold.

An orthonormal basis for $\mathbb{R}^{n N}$ is built by considering a linear vector space $\mathbb{O}$ of $dim(\mathbb{O})=n (n_2-1)$ that is orthogonal to $\mathbb{P}$. All vectors of $\mathbb{R}^{n N}$ can be thus expressed as linear combinations of vectors in $\mathbb{P}$ and vectors in $\mathbb{O}$, that is, $\eta = \left( M \otimes I_n\right) \xi$ with
\begin{equation}
  \label{eq:transformation}
    {M}=
  \begin{bmatrix}
    1 & 0 & ... & 0 & 0 & 0 & ... & 0\\
    0 & 1 & ... & 0 & 0 & 0 & ... & 0\\
    \vdots & \vdots & \ddots & \vdots & \vdots & \vdots & \ddots & \vdots \\
    0 & 0 & ... & 1 & 0 & 0 & ... & 0\\
    0 & 0 & ... & 0 & \frac{1}{\sqrt{n_2}} & o_{n_1+1, 1} & ... & o_{n_1+1, n_2-1}\\
    0 & 0 & ... & 0 & \frac{1}{\sqrt{n_2}} & o_{n_1+2, 1} & ... & o_{n_1+2, n_2-1}\\
    \vdots & \vdots & \ddots & \vdots & \vdots & \vdots & \ddots & \vdots \\
    0 & 0 & ... & 0 & \frac{1}{\sqrt{n_2}} & o_{N, 1} & ... & o_{N, n_2-1}\\
  \end{bmatrix}
\end{equation}

For easy of compactness, matrix ${M}$ is rewritten as $M=
\begin{bmatrix}
  I_{n_1} & 0\\
  0 & R_{n_2}\\
\end{bmatrix}$.

The evolution of the first $n (n_1+1)$ elements of $\xi$ is the evolution of motions along the synchronization manifold, while the remaining elements of $\xi$ are transversal to the synchronization manifold. As a consequence of this, to prove the exponential stability of the synchronization manifold, we have to prove that the evolution of the last $n (n_2-1)$ elements of vector $\xi$ decays exponentially to $0$ as $t\rightarrow +\infty$.

We now apply the transformation $\xi = \left( M \otimes I_n\right)^{-1} \eta$ to Eqs.~(\ref{eq:linearized}):
\begin{equation}
\label{eq:xiSystem1}
\begin{small}
\begin{array}{l}
  \dot{\xi}
  = \left( M^{-1} \otimes I_n \right) DF \left( M \otimes I_n \right) \xi - \sigma \left( M^{-1} \mathcal{L}'' M \right) \otimes H \xi
\end{array}
\end{small}
\end{equation}

Straightforward calculations yield that $\left( M^{-1} \otimes I_n \right) DF \left( M \otimes I_n \right)= DF$. Let us now focus on $M^{-1} \mathcal{L}'' M$. To calculate this term, we partition $\mathcal{L}''$ in (\ref{eq:controlled}) as follows:
\begin{equation}
\label{eq:13}
\mathcal{L}'' =
\begin{bmatrix}
  A_{n_1 \times n_1} & B_{n_1 \times n_2}\\
  C_{n_2 \times n_1} & D_{n_2 \times n_2}\\
\end{bmatrix}
\end{equation}

From hypothesis \ref{en:T2}), it follows that $D=k I_{n_2}$. Consider now the block $C_{n_2 \times n_1}$. From hypothesis \ref{en:T1}) we have that $\mathcal{L}''_{i_1,j}=\mathcal{L}''_{i_2,j}, \forall j\leq n_1, \forall i_1,i_2 > n_1$.
Denoting with $C_i$ the $i$-th row of $C$ we obtain that $C_{i_1} = C_{i_2}, \quad \forall i_1,i_2=1,...,n_2$.

Given that
\begin{equation}
\begin{array}{c}
  M^{-1} \mathcal{L}'' M =
  \begin{bmatrix}
    A & BR_{n_2}\\
    R_{n_2}^T C & R_{n_2}^T D R_{n_2}\\
  \end{bmatrix}
\end{array}
\end{equation}

\noindent since $D =k I_{n_2}$ and all the rows in $C$ are equal, we can rewrite $\mathcal{L}''$ as:
\begin{equation}
  \mathcal{L}'' =
  \begin{bmatrix}
    A & B\\
    \begin{matrix}
      \begin{matrix}
        c_1\\
        c_1\\
        \vdots\\
        c_1
      \end{matrix} &
      ... &
      \begin{matrix}
        c_{n_1}\\
        c_{n_1}\\
        \vdots\\
        c_{n_1}
      \end{matrix}\\
    \end{matrix} &
    \begin{matrix}
      k & 0 & ... & 0\\
      0 & k & ... & 0\\
      \vdots & \vdots & \ddots & \vdots\\
      0 & 0 & ... & k
    \end{matrix}
  \end{bmatrix}
\end{equation}
\noindent where $c_i \in \{0,1\}$ if node $i$ is connected or not with the nodes of $V_{n_2}$.

Notice that the first row of $R_{n_2}^T$ is a vector parallel to $\left[ 1, 1, ..., 1 \right] $ while the remaining ones are all orthogonal to it, so:
\begin{equation}
  R_{n_2}^T C =
  \begin{bmatrix}
    a_1 \sqrt{n_2} & ... & a_{n_1} \sqrt{n_2} \\
    0 & ... & 0\\
    \vdots & \ddots & \vdots\\
    0 & ... & 0
  \end{bmatrix}
\end{equation}

Moreover $R_{n_2}^T D R_{n_2} = R_{n_2}^T kI_{n_2} R_{n_2}= kI_{n_2} = D$.

It follows that Eq.~(\ref{eq:xiSystem1}) becomes:
\begin{equation}
\label{eq:xiSystem2}
\begin{small}
\begin{array}{l}
\dot{\xi} =
\left[
\begin{array}{cccc|ccc}
    Df_1 & \cdots & 0 & 0 & 0 & \cdots & 0\\
    \vdots &  & \vdots & \vdots & \vdots & & \vdots\\
    0 & \cdots & Df_{n_1} & 0 & 0 & \cdots & 0\\
    0 & \cdots & 0 & Df_s & 0 & \cdots & 0\\
    \hline
    0 & \cdots & 0 & 0 & Df_s & \cdots & 0\\
    \vdots &  & \vdots & \vdots & \vdots & & \vdots\\
    0 & \cdots & 0 & 0 & 0 & \cdots & Df_s
\end{array} \right] \xi - \sigma \cdot\\
\left[
\begin{array}{ccc|cc}
    l_{11} \ldots & l_{1n_1} & l_{1n_1+1} & l_{1n_1+2} \ldots & l_{1N}\\
    \vdots & \vdots & \vdots & \vdots  & \vdots\\
    l_{n_11} \ldots & l_{n_1n_1} & l_{n_1n_1+1} & l_{n_1n_1+2} \ldots & l_{n_1N}\\
    c_1 \sqrt{n_2} \ldots & c_{n_1}\sqrt{n_2} & k & 0 \ldots & 0\\
    \hline
    0 \ldots & 0 & 0 & k \ldots & 0\\
    \vdots  & \vdots & \vdots & \vdots  & \vdots\\
    0 \ldots & 0 & 0 & 0 \ldots & k
\end{array} \right] \otimes H \xi
\end{array}
\end{small}
\end{equation}


\noindent where the lines in the matrices suggest a partition highlighting the last $n \left( n_2-1 \right)$ elements of $\xi$.

The system describing the evolution of variations transversal to the synchronization manifold is uncoupled from the rest of the equations and composed of identical blocks, taking the following form:

\begin{equation}
  \label{eq:equivalent}
  \dot{\zeta} = \left( Df_s - \sigma k H \right) \zeta, \quad \mbox{ with } \zeta \in \mathbb{R}^{n}
\end{equation}

It only remains to prove the exponential stability of (\ref{eq:equivalent}). By hypothesis \ref{en:T4}) we have that $k > \frac{\overline{q}}{\sigma}$, thus $k\sigma > \overline{q}$. From the LMI~(\ref{eq:LMI}) we can use the Lyapunov function $V = \zeta^T L \zeta$ to prove exponential stability:
\begin{enumerate}
\item $V(0) = 0$;
\item $V(\zeta) = \zeta^T L \zeta > 0, \; \forall \zeta \neq 0$ because $L>0$;
\item $\dot{V} < 0, \; \forall \zeta \neq 0$ in fact:
\end{enumerate}
\begin{equation}
\begin{array}{l}
\frac{d}{dt}\left[ \zeta^T L \zeta \right]   =   \dot{\zeta^T}L\zeta + \zeta^T L \dot{\zeta} \\
 = \zeta^T \left[ \left( Df_s - \sigma k H \right)^T L + L \left( Df_s - \sigma k H \right) \right] \zeta \\
  \leq -\tau \zeta^T \zeta < 0.
\end{array}
\end{equation}
\end{proof}

We note that, in the application of Theorem~\ref{th:theorem1}, we can first consider the set formed by the union of the successors of the nodes to control. If the cardinality of this set is greater than $\frac{\overline{q}}{\sigma}$, then we can add links such that the successors of each node of $V_{n_2}$ are all the elements of this set. Otherwise, one needs to expand this set by including other nodes of the network. Interestingly, the choice of such nodes is totally arbitrary and any node, not yet included in the set of successors, fits for the purpose.

The upper bound of $k$ is the cardinality of $\mathcal{V}\setminus V_{n_2}$ and, since Theorem~\ref{th:theorem1} requires that $k > \frac{\overline{q}}{\sigma}$, a necessary condition for the application of the proposed technique is that $\sigma > \frac{\overline{q}}{n_1}$: if this condition is not met, then, there are not enough nodes in $\mathcal{V}\setminus V_{n_2}$ to which the nodes of $V_{n_2}$ can be connected by the controllers.

\section{Optimization}
\label{sec:optimization}

In this section we address the problem of optimizing the controllers
with respect to the cost of the links added or removed. Let
$w_{ij}^-$ ($w_{ij}^+$) be the cost associated to the removal of an
existing link (addition of a new link) between $i$ and $j$. These
parameters account for a general scenario where different links have
different costs to change.

Formally, the following minimization problem is considered:
\begin{equation}
\label{eq:optmization}
\min\limits_{\mathbf{u}\in \mathcal{U}} \sum\limits_{\mathcal{L}'_{i,j}=1} \mathcal{L}'_{ij}w_{ij}^-+\sum\limits_{\mathcal{L}'_{i,j}=-1}|\mathcal{L}'_{ij}|w_{ij}^+
\end{equation}
\noindent where $\mathcal{U}$ is the set of controllers that satisfies
Theorem~\ref{th:theorem1} and, thus, ensures the existence and
stability of $\mathbf{x}^s(t)$.  In the special case, when the costs
are equal and unitary, i.e., $w_{ij}^-=w_{ij}^+=1$, the optimization
problem reduces to
\begin{equation}
\label{eq:optmization2}
\min\limits_{\mathbf{u}\in \mathcal{U}} \sum\limits_{i,j} |\mathcal{L}'_{ij}|
\end{equation}
\noindent i.e., minimization of the number of links added or removed by the controllers.

Let $\bar{k} \triangleq \lceil \frac{\bar{q}}{\sigma} \rceil$. Theorem~\ref{th:theorem1} requires that the nodes in $V_{n_2}$ have a number of successors greater than or equal to $\bar{k}$, i.e., since $|\mathcal{S}''|=k$, $k \geq \bar k$. The optimization problem is thus equivalent to determine the nodes in $\mathcal{S}''$ which minimize the objective function (\ref{eq:optmization}). Consider the set $\bar{\mathcal{S}}= \bigcup_{v_i \in V_{n_2}}\mathcal{S}_i \setminus V_{n_2}$, containing the successors of at least one node of the pristine network that are not in $V_{n_2}$. Depending on the cardinality of this set we can have two different scenarios: 1) if $|\bar{\mathcal{S}} | < \overline{k}$, then, ${\mathcal{S}''}$ needs to contain all the nodes in $\bar{\mathcal{S}}$ and some other nodes of the set $V_{n_1}\setminus \bar{\mathcal{S}}$; 2) if $\left|\bar{\mathcal{S}}\right| \geq \bar{k}$, then, one has to select ${\mathcal{S}''} \subset \bar{\mathcal{S}}$. In both cases, the choice of the nodes in ${\mathcal{S}''}$ is accomplished taking into account the costs associated to the network links.

First, note that, given $V_{n_2}$, to fulfill condition~\ref{en:T2}) of Theorem~\ref{th:theorem1} the links between nodes in this set need to be removed. This yields a fixed cost $\bar{c} = \sum \limits_{i,j \in V_{n_2}} a_{ij}w_{ij}^-$ such that $\min\limits_{\mathbf{u}\in \mathcal{U}} \sum\limits_{\mathcal{L}'_{i,j}=1} \mathcal{L}'_{ij}w_{ij}^-+\sum\limits_{\mathcal{L}'_{i,j}=-1}|\mathcal{L}'_{ij}|w_{ij}^+ \geq \bar{c}$.

Let $c_i^+$ be the cost to have node $i$ in ${\mathcal{S}''}$ and $c_i^-$ the cost of not including it in ${\mathcal{S}''}$. It follows that $c_i^-= \sum\limits_{ j \in V_{n_2}}a_{ij}w_{ij}^-$ and $c_i^+= \sum\limits_{ j \in V_{n_2}}(1-a_{ij})w_{ij}^+$.
Once calculated $c_i^+$ and $c_i^-$, we reformulate the optimization problem in terms of minimization of the overall cost of the control: $Cost=\sum \limits_{v_i \in \bar{\mathcal{S}}} \left [ c_i^+ x_i + c_i^- (1-x_i)\right ]+\bar{c}$, where $x_i$ ($i=1,\ldots,N$) are decisional variables, such that $x_i=1$ if $v_i \in {\mathcal{S}''}$ and $x_i=0$ otherwise. The optimization problem now reads:
\begin{equation}
  \begin{cases}
    \min \sum \limits_{v_i \in \bar{\mathcal{S}}} \left [ c_i^+ x_i + c_i^- (1-x_i) \right ] +\bar{c}\\
    \sum \limits_{v_i \in \bar{\mathcal{S}}} x_i \geq \bar{k}
  \end{cases}
\end{equation}
\noindent where the constraint $\sum \limits_{v_i \in \bar{\mathcal{S}}} x_i \geq \bar{k}$ guarantees that condition $k\geq \bar{k}$ holds. Since the overall cost can be rewritten as $Cost=\sum \limits_{v_i \in \bar{\mathcal{S}}} \left( c_i^+ - c_i^- \right)x_i + \sum\limits_{v_i \in \bar{\mathcal{S}}} c_i^-+\bar{c}$ and the terms $\sum\limits_{v_i \in \bar{\mathcal{S}}} c_i^-$ and $\bar{c}$ do not depend on the variables $x_i$, the optimization problem becomes
\begin{equation}
\label{eq:optproblemfinal}
  \begin{cases}
    \min \sum \limits_{v_i \in \bar{\mathcal{S}}} c_i x_i\\
    \sum \limits_{v_i \in \bar{\mathcal{S}}} x_i \geq \bar{k}
  \end{cases}
\end{equation}
\noindent where $c_i \triangleq c_i^+ - c_i^-$.

This formulation prompts the following solution for the optimization problem. Defining $k' \triangleq \left| \left\{v_i \in \bar{\mathcal{S}} \; | \; c_i \leq 0 \right\}\right|$ and sorting the nodes in $\bar{\mathcal{S}}$ in ascending order with respect to their cost $c_i$, we take $k_{max}=\max\left\{\bar{k},\;k'\right\}$ and assign $x_i = 1$ to the first $k_{max}$ nodes and $x_i = 0$ to the remaining ones. The overall cost to achieve synchronization of the nodes in the set $V_{n_2}$ is given by
\begin{equation}
\label{eq:finalCost}
  Cost=\sum \limits_{v_i \in \bar{\mathcal{S}}} c_i^- + \sum \limits_{v_i \in {\mathcal{S}''}}c_i+\bar{c}
\end{equation}

Algorithm~\ref{al:algorith} is based on the above observations and
returns the nodes belonging to ${\mathcal{S}''}$. The inputs are $V_{n_2}$, $\bar{k}$ and $A$ (the adjacency matrix of
the network) and the outputs are the set ${\mathcal{S}''}$ and the
overall cost.

 \begin{algorithm}
  \caption{Algorithm to select the nodes in ${\mathcal{S''}}$ \label{al:algorith}}
 \begin{algorithmic}[1]
 \renewcommand{\algorithmicrequire}{\textbf{Input:}}
 \renewcommand{\algorithmicensure}{\textbf{Output:}}
 \REQUIRE $V_{n_2}$, $\bar{k}$ and $A$
 \ENSURE  ${\mathcal{S}''}$, overall cost $Cost$\\
  \textit{Initialization}:
  \STATE Create $\bar{\mathcal{S}}= \bigcup_{v_i \in V_{n_2}}\mathcal{S}_i\setminus V_{n_2}$ and determine its cardinality $|\bar{\mathcal{S}}|$\\
  \textit{Procedure}:
  \STATE Calculate $c_i^-$, $c_i^+$, $\bar{c}$ and $c_i = c_i^+ - c_i^-$
  \STATE Sort the nodes in $\bar{\mathcal{S}}$ in ascending order and set $k' = \left| \left\{ v_i \in {\mathcal{S}} \; | \; c_i \leq 0 \right\} \right|$
  \STATE Calculate $k_{max}=\max\left\{\bar{k},\;k'\right\}$

 \IF {$|\bar{\mathcal{S}}| \geq \bar k$}
 \RETURN $Cost=  \bar{c} + \sum \limits_{v_i \in \bar{\mathcal{S}}} c_i^- + \sum \limits_{v_i \in \bar{\mathcal{S}''}}c_i$ and build ${\mathcal{S}''}$ taking the first $k_{max}$ nodes in $\bar{\mathcal{S}}$
  \ELSE
 \STATE Build ${\mathcal{S}''}$ taking all the elements of $\bar{\mathcal{S}}$ and add nodes from $V_{n_1}\setminus \bar{\mathcal{S}}$, in ascending order of $c_i$, until $|{\mathcal{S''}}| =  \bar{k} $
 \ENDIF
 \RETURN  $Cost$ and ${\mathcal{S}''}$
 \end{algorithmic}
 \end{algorithm}

\section{Examples}
\label{sec:examples}

We now discuss an example of how the proposed control works in
the directed network with $N=20$ nodes shown in Fig.~\ref{fig:erRandom}.
We refer to several cases, corresponding to two distinct sets of nodes
to control, $V_{n_2}$, two values of $\bar{k}$, and different costs
associated to the links. For each of these cases, the controllers that
satisfy Theorem~\ref{th:theorem1} and are the result of the
optimization procedure of Sec.~\ref{sec:optimization} are discussed;
we will show that they depend on the control goal, on the link costs
and, through $\bar{k}$, on the coupling coefficient.

More specifically, we first consider unitary costs for the links and
synchronization of two different triplets of nodes, i.e., either
$V_{n_2}=\{1, 4, 16 \}$ or $V_{n_2}=\{1, 3, 19 \}$, with two values of
$\bar{k}$, i.e., $\bar{k}=1$ and $\bar{k}=3$\footnote{The value of
  $\bar{k}$ depends on the node dynamics considered and strength of
  the coupling, e.g., with reference to Chua's circuit as node
  dynamics and coupling of the type
  $\mathrm{H}=\textrm{diag}\{1,1,0\}$, we have $\bar{q}=4.5$ \cite
  {gambuzza2019distributed}, and $\bar{k}=1$ for $\sigma=5$ while
  $\bar{k}=3$ for $\sigma=2$.}. This leads to cases 1-4 in
Table~\ref{table}. Case 5, instead, refers to a scenario where the
costs are not unitary.

\begin{figure}
\centering
\includegraphics[scale=.5]{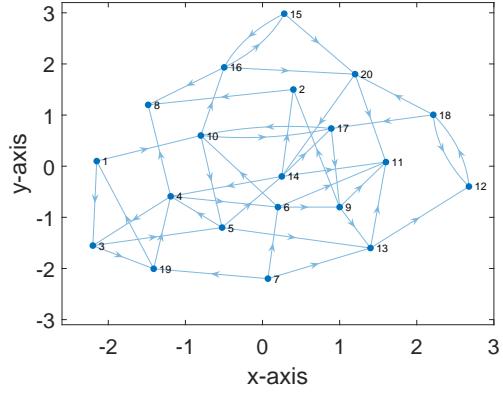}
\caption{Case study: a directed network of $N=20$ nodes. The spatial
  position of the network nodes is used to define costs for link
  addition proportional to the Euclidean distance of the nodes to be
  connected. \label{fig:erRandom}}
\end{figure}

\emph{Case 1: $V_{n_2}=\{1, 4, 16\}$, $\bar{k}=1$.} Here, we have that $\bar{\mathcal{S}}=\{3,6,8,10,15,20\}$ and $|\bar{\mathcal{S}}|>\bar{k}$. Following Algorithm~\ref{al:algorith}, we find $k'=2$, so that $k'>\bar{k}$ and ${\mathcal{S}''}=\{3,8\}$. Synchronization of the nodes in $V_{n_2}$ is achieved if two links are added to the original network, and four links are removed.

\emph{Case 2: $V_{n_2}=\{1, 4, 16\}$, $\bar{k}=3$.} Here again $|\bar{\mathcal{S}}|>\bar{k}$, but
$k'<\bar{k}$. We get ${\mathcal{S}''}=\{3,6,8\}$, four links to add and three to remove.

\emph{Case 3: $V_{n_2}=\{1, 3, 19\}$, $\bar{k}=1$.} We have $\bar{\mathcal{S}}=\{4,10\}$ and $|\bar{\mathcal{S}}| > \bar{k}$. In this case, we obtain ${\mathcal{S}''}=\{4\}$, a single link to add and three links to remove.

\emph{Case 4: $V_{n_2}=\{1, 3, 19\}$, $\bar{k}=3$.} We have $\left| \bar{\mathcal{S}} \right| < \bar{k}$ thus, following step 8 of the algorithm, we need to add a node from $V_{n_1}\setminus \bar{\mathcal{S}}$, i.e. a node which is not a successor of any of the nodes to be synchronized. As the choice is completely arbitrary, we select node $6$. So, ${\mathcal{S}''}=\{4,6,10\}$. Control is attained by adding seven links and removing three links.

\emph{Case 5: $V_{n_2}=\{1, 4, 16\}$, $\bar{k}=3$, non-unitary costs.} For the purpose of illustration, here we assume that the cost to add a link is proportional to the distance between the two nodes, while removing links always has a unitary cost. We consider the synchronization problem as in case 2. Here, the different costs yield a different result for ${\mathcal{S}''}$, i.e., ${\mathcal{S}''}=\{3,8,10\}$. In this scenario, optimization requires to include in ${\mathcal{S}''}$ node 10 rather than node 6.

\begin{table}
  \centering
  \caption{Added and removed links obtained for different $V_{n_2}$ and $\bar{k}$ for the network in Fig.~\ref{fig:erRandom}. Costs associated to links are considered unitary in all cases, except for case 5. \label{table}}
\setlength{\tabcolsep}{0.5em} 
{\renewcommand{\arraystretch}{1.2}
    \begin{tabular}{l|c|l|p{3cm}|p{2.5cm}}
    Case & $V_{n_2}$ & $\bar{k}$ & Added links  & Removed links  \\
    \hline
    1 & \{1,4,16\} & 1 & (1,8) (16,3) & (16,15) (16,20) (1,10) (4,6) \\
    2 & \{1,4,16\} & 3 & (1,8) (16,3) (1,6) (16,6) & (16,15) (16,20) (1,10)\\
    3 & \{1,3,19\} & 1 & (1,4) & (1,10) (1,3) (19,1) \\
    4 & \{1,3,19\} & 3 & (1,6) (1,4) (3,10) (3,4) (3,6) (19,20) (19,6) & (19,1) (1,3) (3,19) \\
    5* & \{1,4,16\} & 3 &  (1,8) (4,10) (16,3) (16,10) &  (4,6) (16,15) (16,20)  \\
    \hline
\end{tabular}
}
\end{table}

Finally, for case 2 we report the waveforms obtained by simulating the network with control (Fig.~\ref{fig:erRandomWaveforms}). Chua's circuits starting from random initial conditions are considered (equations and parameters have been fixed as in \cite{gambuzza2019distributed}). Fig.~\ref{fig:erRandomWaveforms} shows that the nodes in $V_{n_2}=\{1, 4, 16\}$ follow the same trajectory, while the remaining units are not synchronized with them. Similar results are obtained for the other scenarios.

\begin{figure}
\centering
\includegraphics[scale=.5]{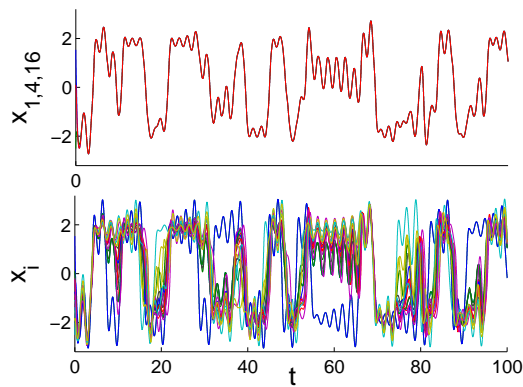}
\caption{Evolution of the first state variable for nodes in $V_{n_2}=\{1, 4, 16\}$ (upper plot) and for all the network nodes (bottom plot). \label{fig:erRandomWaveforms}}
\end{figure}

\section{Conclusions}
\label{sec:Conclusions}

In this work we have focused on the problem of controlling
synchronization of a group of nodes in directed networks. The nodes
are all assumed to have the same dynamics and, similarly, coupling is
assumed to be fixed to the same value along all the links of the
network. The technique we propose is based on the use of distributed
controllers which add further links to the network
or remove some of the existing
ones, creating a new network structure which has to
satisfy two topological conditions. The first condition refers to the fact
that, in the new network, merging the existing links and those of
the controllers, the nodes to control must be outer symmetrical, while
the second condition requires that the out-degree of these nodes has to be
higher than a threshold. Quite interestingly, the threshold depends
on the dynamics of the units and on the coupling strength, in such a way
that a higher coupling strength favors control as it requires a
smaller out-degree. It is also worth noticing that, when the
out-degree needs to be increased to exceed the threshold, this can be
obtained by connecting to any of the remaining nodes of the network.

The selection of the nodes forming the set of successors of the units
to control is carried out by considering an optimization problem and
finding the exact solution that minimizes the cost of the changes
(i.e. link additions or removals). In the case of
unitary costs, the problem reduces to minimization of the number of
added or removed links, thereby defining a strategy for the control of
synchronization of a group of nodes in a directed network with minimal
topological changes.

\bibliographystyle{IEEEtran}

\end{document}